\def\cA{\mathcal{A}}
\def\cB{\mathcal{B}}
\def\cE{\mathcal{E}}
\def\cH{\mathcal{H}}
\def\cO{\mathcal{O}}
\def\cX{\mathcal{X}}
\tikzstyle{every edge}=[-,shorten >=1pt,auto,thin,draw]
\tikzstyle{bluestate}=[draw,diamond]
\tikzstyle{redstate}=[draw,minimum height = .7cm, minimum width=.7cm]
\tikzstyle{whitestate}=[draw,circle]
\tikzstyle{boxbluestate}=[draw,  diamond]
\tikzstyle{boxredstate}=[draw, minimum height=.7cm]
\tikzstyle{boxwhitestate}=[draw,ellipse]
\title{Incremental complexity of a bi-objective  hypergraph transversal problem}
\titlerunning{A bi-objective  hypergraph transversal problem}
\author{Ricardo Andrade\inst{2} \and Etienne Birmel\'e\inst{1} \and Arnaud Mary\inst{2} \and Thomas Picchetti\inst{1} \and Marie-France Sagot\inst{2}}
\authorrunning{R. Andrade et al.}
\institute{MAP5, UMR CNRS 8145, Universit\'e Paris Descartes
\and Universit\'{e} de Lyon, F-69000, Lyon ; Universit\'{e} Lyon 1 ; CNRS, UMR5558, Laboratoire de Biom\'{e}trie et Biologie Evolutive, F-69622 Villeurbanne, France / INRIA Grenoble Rh\^one-Alpes - ERABLE}
\begin{document}

\maketitle

\begin{abstract}
The hypergraph transversal problem has been intensively studied, from both a theoretical and a practical point of view. In particular, its incremental complexity is known to be quasi-polynomial in general and polynomial for bounded hypergraphs. Recent applications in computational biology however require to solve a generalization of this problem, that we call bi-objective transversal problem. The instance is in this case composed of a pair of hypergraphs $(\cA,\cB)$, and the aim is to find minimal sets which hit all the hyperedges of $\cA$ while intersecting a minimal set of hyperedges of $\cB$. In this paper, we formalize this problem, link it to a problem on monotone boolean $\wedge - \vee$ formulae of depth $3$ and study its incremental complexity.
\end{abstract}

\section{Introduction}

  Let
$G(V,\cA)$ with $\cA \subseteq 2^V$ be a hypergraph on a finite set $V$. By abuse of language, we may also from now on refer to the hypergraph more simply by $\cA$ only.
A {\em transversal} of $\cA$ is any set $S\subseteq V$ intersecting all hyperedges of $\cA$.
It is straightforward to see that being a transversal is a monotone property on the subsets of $V$, so that the collection of minimal transversals characterizes all of them. This collection is called the {\em dual} or {\em transversal hypergraph} of $\cA$, and is denoted by $tr(\cA)$.

  The problem of computing the transversal hypergraph of any $\cA$ is equivalent to enumerating maximal independent sets in hypergraphs \cite{BEG00} or to solving the Boolean function dualization problem \cite{EMG08}. Furthermore, it has many applications, for instance in artificial intelligence \cite{EG02}.
This problem thus received much attention in the last decades, both from a theoretical and a practical point of view (see \cite{EMG08} for a review).

  The first method was proposed by Berge~\cite{Berge89}, who considered the hyperedges
iteratively, updating the partial solutions obtained at each step. This algorithm may however have to store a high number of partial solutions, and no final solution will be available until the algorithm stops. More recent work thus focused on methods that build the minimal transversals
iteratively, that is which study the following problem \cite{KBE06}:

\begin{problem}{$\bf{DUAL(\cA, \cX)}$}\label{PBdual}
 Given a hypergraph $\cA$ and a set $\cX$ of minimal transversals of $\cA$, prove that $tr(\cA)=\cX$ or find a new minimal transversal in $tr(\cA)\setminus \cX$.
\end{problem}

 The complexity of this problem remains an open question. However, Fredman and Khachiyan~\cite{FK96} showed that it is quasi-polynomial by proposing two algorithms of respective complexities $N^{\cO(\log^2 N)}$ and $N^{o(\log N)}$, where $N = |\cA| + |\cX|$ is the size of the input. We define the dimension $dim(\cA)$ of a hypergraph $\cA$ as the size of its largest hyperedge, and the degree of a vertex as the number of hyperedges it belongs to. For hypergraphs of bounded dimension, the problem is polynomial \cite{EG95} and parallelizable~\cite{BEG00}. It is also polynomial for hypergraphs of bounded degree~\cite{EGM03}. Moreover, the complexity class of the problem does not change if multiple minimal transversals or partial minimal transversals are required~\cite{BGK00}.

 The performance of the algorithms in practice was also studied in several publications. Khachiyan et al.~\cite{KBE06} introduced an algorithm of the same worst-case complexity than the one of Fredman and Khachiyan but with a better performance in practice. More recently, Toda~\cite{Toda13} and Murakami and Uno~\cite{MU14} compared the existing algorithms and proposed new ones which can deal with large scale hypergraphs.
\medskip

 Determining the transversal of a hypergraph has also several already studied or potential applications in computational biology. It was for example proposed for elaborating knock-out strategies in metabolic networks~\cite{HKS08}, the hyperedges representing metabolic pathways whose activity should be suppressed. One may also consider the vertices as genes and a hyperedge as the set of mutated genes in a tumoral tissue. The transversal hypergraph then lists the collection of minimal mutation sets covering all the tumors. The mutation scenarios would be described by sets of genes, rather than by a single ranking of the genes based on the p-value of a statistical over-representation test.

  However, due to the complexity of cellular mechanisms, in both previous cases it appears there actually are two types of hyperedges, some of them having to be intersected while some others should be avoided.
 Indeed, if one wants to
knock-out a given set of metabolic pathways, one needs to maintain the biomass production of the cell in order to avoid cellular death. H\"adicke and Klamt~\cite{HK11}
 introduced thus the notion of {\em constrained minimal cut sets} corresponding to vertex sets hitting all target pathways while avoiding at least $n$ pathways among a prescribed set. An adaptation of the Berge algorithm was proposed and
was
compared to binary integer programming on real data sets~\cite{JNK13}.

Coming back to the tumoral mutation example, a similar bi-objective problem appears. Mutations may indeed not be related to cancer, and the goal is to discriminate driver mutations from so-called back-seat mutations. Bertrand et al.~\cite{BCS15}
show that this  is equivalent to the {\em minimal set cover} and used a greedy approximation algorithm to solve it. An alternative way to deal with the problem would be to use other mutation data on similar but non tumoral tissues, and to look for mutation collections covering all
tumors while covering as few healthy samples as possible.

We therefore propose to consider a bi-objective generalization of the hypergraph transversal problem, in which two distinct hypergraphs represent, respectively, the sets of nodes to hit
and those to avoid, and to search for the minimal sets of vertices fulfilling both criteria.

\section{The bi-objective transversal problem}\label{PBSection}

\subsection{The problem}

We consider two hypergraphs on the same set of vertices $V$. The first hypergraph $\cA$ will be denoted as the red hypergraph and represents the sets of vertices that have to be intersected.  The second hypergraph $\cB$ will be denoted as the blue hypergraph and represents the sets of vertices which should not be intersected if possible. We will represent such an instance as a tripartite graph, as shown in Figure~\ref{tripartitefig}.

For any $S \subset V$, we define

\[ \cA_S = \{ A \in \cA; S\cap A \neq \emptyset \}  \]

and

\[ \cB_S = \{ B \in \cB; S\cap B \neq \emptyset \}  \]

In particular, for $x\in V$, $\cA_x$ and $\cB_x$ denote the sets of red and blue hyperedges that contain $x$.
\smallskip

The problem then becomes:

\begin{problem}{\bf{Bi-objective hypergraph transversal problem}}\label{PBgeneral}
Given a hypergraph $G=(V,\cH)$ and a partition $\cH=\cA \cup \cB$ of its hyperedges, enumerate the sets $S\subset V$ such that:
\begin{enumerate}
\item
$\cA_S = \cA$ and $S$ is minimal for this property;
\item there exists no $S'$ verifying condition 1 and such that $\cB_{S'}$ is a strict subset of $\cB_S$.
\end{enumerate}
Such sets are called
{\em bi-objective minimal transversals} of the couple $(\cA,\cB)$. The collection of bi-objective minimal transversals of $(\cA,\cB)$ is denoted by $btr(\cA,\cB)$.
\end{problem}

 \begin{figure}
  \begin{center}
  \begin{tikzpicture}
            \node[whitestate] (u) at (0,2) {u};
            \node[whitestate] (v) at (2,2) {v};
            \node[whitestate] (w) at (4,2) {w};
            \node[whitestate] (x) at (6,2) {x};
            \node[redstate] (1) at (1,4) {$A_1$};
            \node[redstate] (2) at (3,4) {$A_2$};
            \node[redstate] (3) at (5,4) {$A_3$};
            \node[bluestate] (4) at (1,0) {$B_1$};
            \node[bluestate] (5) at (3,0) {$B_2$};
            \node[bluestate] (6) at (5,0){$B_3$};
            \path (u) edge (1) ;
            \path (u) edge (2) ;
            \path (u) edge (4) ;
            \path (v) edge (3) ;
            \path (v) edge (2) ;
            \path (v) edge (5) ;
            \path (v) edge (4) ;
            \path (w) edge (3) ;
            \path (w) edge (4) ;
            \path (x) edge (5) ;
            \path (x) edge (6) ;
            \path (x) edge (1) ;
            \path (x) edge (3) ;
  \end{tikzpicture}
  \end{center}
  \caption{Tripartite representation of an instance of the problem. The
  circled vertices are the vertices of the hypergraphs. The squared (resp. diamond) vertices represent the hyperedges of $\cA$ (resp. $\cB$).
    Consider the sets  $S= \{u,v\}$ and $T= \{u,w\}$. Both are minimal transversals for $\cA$. However $S$ is not a solution for the
bi-objective problem as $\cB_T$ is a strict subset of $\cB_S$.}

  \label{tripartitefig}
 \end{figure}
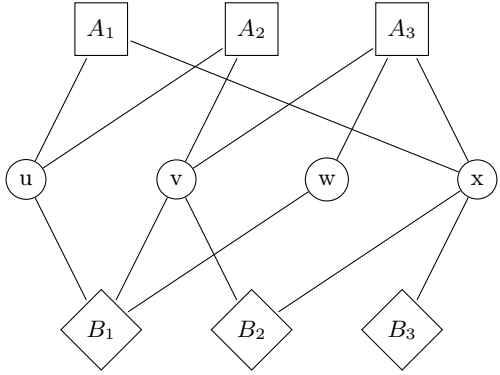

\bigskip

A first approach
for  this problem would be to enumerate all minimal transversals for $\cA$ and then to check for the minimality condition with respect to $\cB$. However, such a procedure may spend an exponential time
on
enumerating minimal transversals of $\cA$ which will be ruled out in the second step. Indeed, consider a hypergraph $\cA$ on a vertex set $V$ such that $\cA$ has an exponential number of minimal transversals. Let $S$ be one of them. Consider the hypergraph $\cB$ having $V\setminus S$ as unique hyperedge. $S$ is then the unique bi-objective minimal transversal of $(\cA,\cB)$.

As the dual hypergraph problem corresponds to the special case $\cB=\emptyset$, we propose to adopt the same strategy of an incremental search of the solutions. We therefore introduce the following problem:

\begin{problem}{$\bf{BIDUAL(\cA, \cB, \cX)}$}\label{PBBIDUAL}
 Given hypergraphs $\cA$ and $\cB$ on the same vertices and a set $\cX$ of bi-objective minimal transversals of $(\cA,\cB)$, prove that $btr(\cA)=\cX$ or find a new minimal transversal in $btr(\cA)\setminus \cX$.
\end{problem}

\subsection{A new enumeration problem}

For $B\subseteq \cB$,
we denote by $S_B$ the set such that
 $S_B=\{x\in V(\cB) \mid \cB_x\subseteq B\}$. In other words, $S_B$ is the set of vertices belonging to no blue hyperedges except for those in $B$.

 For instance, in Figure~\ref{tripartitefig}, $S_{B_1}=\{u,w\}$,  $S_{B_2}=\emptyset$ and $S_{\{B_2,B_3\}}=\{x\}$.

 We define a predicate $f:2^{\cB} \to \{0,1\}$ with

\begin{equation}
  f(B)=\left\{
      \begin{aligned}
        1 & \text{ if }S_B\text{ is a transversal of } \cA\\
        0 & \text{ otherwise}\\
      \end{aligned}
    \right.
\end{equation}

and introduce a new enumeration problem:

\begin{problem}{\bf{Minimal $\cB$-sets enumeration problem }}
\label{PBf}
  Enumerate all minimal $B\subseteq \cB$ such that $f(B)=1$.
\end{problem}

In Figure~\ref{tripartitefig}, $B_1$ is the only solution to the above problem.

The sets to be enumerated in Problems~\ref{PBgeneral} and \ref{PBf} are then linked through the following result.

\begin{lemma}\label{LemmaB}
\begin{enumerate}
\item Let $S$ be a solution to Problem~\ref{PBgeneral}. We then have that $\cB_S$ is a solution to Problem~\ref{PBf}.
\item Let $B$ be a solution to Problem~\ref{PBf}. We then have that every minimal hitting set $S$ of $\cA$ included in $S_B$ is a solution to Problem~\ref{PBgeneral} and $\cB_S=B$.
\end{enumerate}
\end{lemma}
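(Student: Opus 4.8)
The plan is to prove the two statements by carefully unwinding the definitions relating $S$, $\cB_S$, $S_B$, and the predicate $f$. Before starting, I would record the two elementary facts that drive everything. First, for any $S \subseteq V$ we always have $S \subseteq S_{\cB_S}$: indeed if $x \in S$ then every blue hyperedge containing $x$ meets $S$, so $\cB_x \subseteq \cB_S$, which is exactly the membership condition for $S_{\cB_S}$. Second, monotonicity: if $B \subseteq B'$ then $S_B \subseteq S_{B'}$, hence $f$ is a monotone predicate on $2^{\cB}$; and if $S \subseteq T$ are vertex sets then $\cB_S \subseteq \cB_T$. A third useful observation is that for any $B$, $\cB_{S_B} \subseteq B$ (a vertex of $S_B$ lies only in blue hyperedges from $B$), so in particular $S_B$ never "touches" blue hyperedges outside $B$.

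For part (1), let $S$ solve Problem~\ref{PBgeneral}. Since $S$ is a transversal of $\cA$ and $S \subseteq S_{\cB_S}$ (fact one), $S_{\cB_S}$ is also a transversal of $\cA$ by monotonicity of the transversal property; hence $f(\cB_S) = 1$. It remains to show $\cB_S$ is minimal with this property. Suppose $B \subsetneq \cB_S$ with $f(B) = 1$. Then $S_B$ is a transversal of $\cA$, so it contains a minimal transversal $S'$ of $\cA$; this $S'$ satisfies condition~1 of Problem~\ref{PBgeneral}. Now $\cB_{S'} \subseteq \cB_{S_B} \subseteq B \subsetneq \cB_S$, so $\cB_{S'}$ is a strict subset of $\cB_S$, contradicting condition~2 for $S$. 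Hence $\cB_S$ is minimal and solves Problem~\ref{PBf}.

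For part (2), let $B$ solve Problem~\ref{PBf} and let $S \subseteq S_B$ be a minimal hitting set of $\cA$; such an $S$ exists since $f(B)=1$ means $S_B$ itself is a transversal. Condition~1 of Problem~\ref{PBgeneral} holds by construction. For the value of $\cB_S$: on one hand $\cB_S \subseteq \cB_{S_B} \subseteq B$. On the other hand I claim $\cB_S = B$; if not, $\cB_S \subsetneq B$, but $S \subseteq S_{\cB_S}$ is a transversal of $\cA$ (fact one plus monotonicity), so $f(\cB_S) = 1$ with $\cB_S \subsetneq B$, contradicting minimality of $B$. Thus $\cB_S = B$. Finally, condition~2: suppose some $S'$ satisfies condition~1 with $\cB_{S'} \subsetneq \cB_S = B$. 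Then $S' \subseteq S_{\cB_{S'}}$ is a transversal of $\cA$, so $f(\cB_{S'}) = 1$ with $\cB_{S'} \subsetneq B$, again contradicting minimality of $B$. Hence $S$ is a bi-objective minimal transversal.

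The argument is essentially a sequence of applications of the same two lemmas (containment $S \subseteq S_{\cB_S}$ and monotonicity), so there is no single hard step; the one place requiring care is keeping straight the direction of the inclusions $\cB_{S'} \subseteq \cB_{S_B} \subseteq B$ versus $S \subseteq S_{\cB_S}$, and making sure that "minimal hitting set contained in $S_B$" is used consistently — in particular that passing from $S_B$ down to a minimal transversal $S'$ can only shrink $\cB_{S'}$, which is what makes the contradiction with minimality of $B$ go through.
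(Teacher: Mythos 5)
Your proof is correct and follows essentially the same route as the paper's: both rely on the observations that $S \subseteq S_{\cB_S}$, that $\cB_{S'} \subseteq B$ whenever $S' \subseteq S_B$, and that $f(\cB_{S'})=1$ for any transversal $S'$, then invoke condition~2 of Problem~\ref{PBgeneral} (for item 1) and the minimality of $B$ (for item 2). You simply spell out the monotonicity facts that the paper leaves implicit in its very terse argument.
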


\begin{proof}
The first item is a direct consequence of the definition of a solution to Problem~\ref{PBgeneral}.

For the second item, consider $S'$ a minimal hitting set for $\cA$ with
$\cB_{S'} \subset \cB_S \subset B$. Then $f(\cB_{S'})=1$. By minimality of $B$, $\cB_{S'} = \cB_S = B$.
\qed\end{proof}

Let us then consider the algorithm which enumerates the solutions of Problem~\ref{PBf} and which, each time a new solution $B$ is found,
enumerates all minimal sets $S$ covering $\cA$ and included in $S_B$.


By the first item of Lemma~\ref{LemmaB}, every solution of  Problem~\ref{PBgeneral} is enumerated by this algorithm.

By the second item,  all the sets enumerated by the algorithm are solutions to Problem~\ref{PBgeneral}
and none of them is enumerated twice. Indeed, if that were the case, meaning that two solutions for Problem~\ref{PBf}, say $B$ and $B'$, lead to a same solution $S$ to Problem~\ref{PBgeneral}. The second item then implies that $B=\cB_S=B'$.
%

Moreover, given a set $B\in \cB$,
the enumeration of all minimal sets covering $\cA$ and included in $S_B$ is the transversal hypergraph problem, and can therefore be solved in quasi-polynomial time, and even in polynomial time if $\cA$ is of bounded dimension or if $\cA$ is of bounded degree.
\medskip

A natural question is therefore the complexity class of Problem~\ref{PBf}. The answer to this question is  already partially known. Indeed, consider a set $B$ of hyperedges of $\cB$, and for each hyperedge $b_i$ of $\cB$, the boolean variable $c_i$ indicating if
$b_i\in B$. Then, for any vertex $x$, $x\in S_B$ if and only if
$\bigwedge_{b_i \in \cB_x} c_i = 1$. A hyperedge $a$ of $\cA$ is then hit by $S_B$ if and only if $\bigvee_{x\in a} \bigwedge_{b_i \in \cB_x} c_i = 1$. Finally,

\[ f(B) =  \bigwedge_{a\in \cA} \bigvee_{x\in a} \bigwedge_{b_i \in \cB_x} c_i  \]

Our enumeration problem is therefore the
enumeration of
 all minimal truth assignments satisfying a monotone boolean function of depth $3$.  The situation is illustrated by Figure~\ref{booleanfig}. Conversely, for any monotone formula of type $\bigwedge \bigvee \bigwedge$, one can easily construct an equivalent instance of Problem~\ref{PBf}: simply represent it as a tree of depth 3, merge leaves that are labeled with the same literal, remove the root, and the result is a tripartite graph as in Figure~\ref{tripartitefig}.

 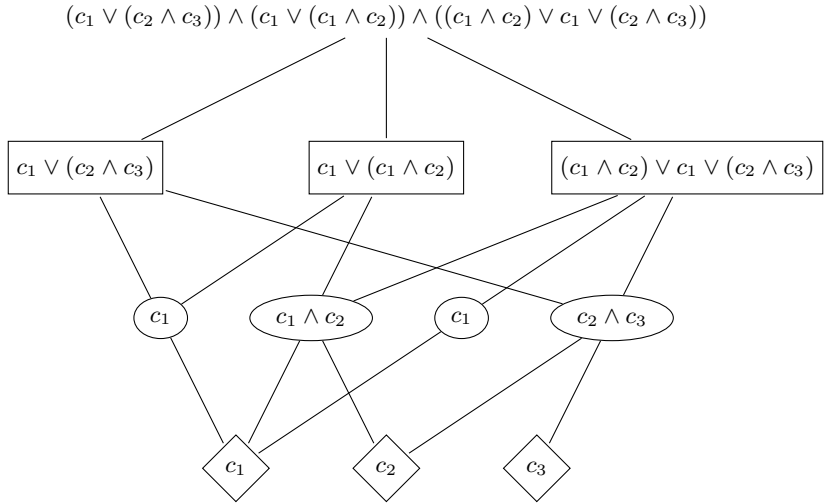
\begin{figure}
  \begin{center}
    \begin{tikzpicture}
            \node[boxwhitestate] (u) at (0,2) {$c_1$};
            \node[boxwhitestate] (v) at (2,2) {$c_1\wedge c_2$};
            \node[boxwhitestate] (w) at (4,2) {$c_1$};
            \node[boxwhitestate] (x) at (6,2) {$c_2 \wedge c_3$};
            \node[boxredstate] (1) at (-1,4) {$c_1 \vee (c_2\wedge c_3) $};
            \node[boxredstate] (2) at (3,4) {$ c_1 \vee (c_1\wedge c_2) $};
            \node[boxredstate] (3) at (7,4) {$(c_1\wedge c_2) \vee c_1 \vee (c_2\wedge c_3)$};
            \node (formula) at (3,6) { $ (c_1 \vee (c_2\wedge c_3)) \wedge (c_1 \vee (c_1\wedge c_2)) \wedge ((c_1\wedge c_2) \vee c_1 \vee (c_2\wedge c_3))$};
            \node[boxbluestate] (4) at (1,0) {$c_1$};
            \node[boxbluestate] (5) at (3,0) {$c_2$};
            \node[boxbluestate] (6) at (5,0){$c_3$};
            \path (u) edge (1) ;
            \path (u) edge (2) ;
            \path (u) edge (4) ;
            \path (v) edge (3) ;
            \path (v) edge (2) ;
            \path (v) edge (5) ;
            \path (v) edge (4) ;
            \path (w) edge (3) ;
            \path (w) edge (4) ;
            \path (x) edge (5) ;
            \path (x) edge (6) ;
            \path (x) edge (1) ;
            \path (x) edge (3) ;
            \path (formula) edge (1) ;
            \path (formula) edge (2) ;
            \path (formula) edge (3) ;
  \end{tikzpicture}
  \end{center}
  \caption{Representation of the instance of Figure~\ref{tripartitefig} in terms of a monotone boolean function.
  The formula of depth $3$ is not true for the truth assignment $c_1=c_2=c_3=0$, indicating that it is not possible to hit all hyperedges of $\cA$ avoiding every hyperedge of $\cB$. However, the truth assignment $c_1=1, c_2=c_3=0$ satisfies the formula, indicating that there exists a solution hitting only the
the first blue (diamond) hyperedge.}

  \label{booleanfig}
 \end{figure}

The associated incremental problem is then the following, introduced in~\cite{GK99}:

\begin{problem}{$\bf{GEN(\mathcal{S})}$} \label{PBGurvich}
 Consider a monotone boolean function of depth $3$ in the shape $\bigwedge \bigvee \bigwedge$.
 Given a set $\mathcal{S}$ of minimal solutions, determine if $S$ is the set of all minimal solutions or find a new one.
\end{problem}

 This problem  was shown in~\cite{GK99} to be coNP-complete. 
{\color{gray}
More precisely, the reduction in~\cite{GK99} shows that in order to
 determine if a DNF formula $d$ is a tautology, it suffices to define a monotone formula $\tilde{d}$ by replacing in $d$ every negative
 litteral $\overline{x_i}$ with a new variable $y_i$, and ask whether the monotone formula $f_0=\tilde{d}\vee[(x_1\vee y_1)\wedge(x_2\vee y_2)\wedge\dots\wedge(x_n\vee y_n)]$ has other prime implicants than the terms in $\tilde{d}$.

 By distributivity $f_0 \equiv f_1=(x_1\vee y_2 \vee \tilde d)\wedge(x_1\vee y_2 \vee \tilde d)\wedge\dots\wedge(x_1\vee y_2 \vee \tilde d)$ which is in the shape $\bigwedge \bigvee \bigwedge$, hence the coNP-completeness of $GEN$.

 Since in the instance of Problem~\ref{PBf} associated to $f_1$ the hypergraphs $\cA$ and $\cB$ have maximum degree $1$ and $3$ (if $d$ is in $3$-DNF), this reduction implies that the corresponding decision problem is coNP-complete even with these degree constraints.

However, the question remains open if the dimension of $\cA$ or $\cB$ is bounded
that is if the
red
(square)
or the blue
(diamond)
vertices in Figure~\ref{booleanfig} are of bounded degree.
}

\medskip

When only one hypergraph is considered, bounding the degree or the dimension reduces the complexity of the transversal enumeration from quasi-polynomial to polynomial.
In the following, we  prove in
Section~\ref{3SATsection}
 that bounding the dimension of $\cB$ or  the degree of the vertices is not enough in the
bi-objective case as
then
$3-SAT$ can be reduced to both $BIDUAL$ and
$GEN$.
However, we show in Section~\ref{boundedAsection} that both problems become polynomial if  the dimension of $\cA$ is bounded.

\section{Bounding the dimension or degree of $\cB$}\label{3SATsection}

\subsection{Results for $\cB$ of bounded dimension}

\begin{theorem}\label{3SATth}
$3-SAT$ can be reduced to $BIDUAL$, even in the case of a bounded value for $dim(\cB)$.
\end{theorem}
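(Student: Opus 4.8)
The plan is to encode a $3$-CNF formula $\varphi = C_1 \wedge \cdots \wedge C_m$ over variables $x_1,\ldots,x_n$ directly as an instance $(\cA,\cB,\cX)$ of $BIDUAL$ with $dim(\cB)=2$, in such a way that $btr(\cA,\cB)\neq\cX$ if and only if $\varphi$ is satisfiable; we may clearly assume $n\geq 2$. Take $V=\{w_1,\ldots,w_n\}\cup\{p_1,q_1,\ldots,p_n,q_n\}$, where $w_i$ is an ``inconsistency gadget'' for $x_i$ and $p_i$ (resp.\ $q_i$) stands for the literal $x_i$ (resp.\ $\overline{x_i}$). As blue hyperedges put, for each $i$, $\beta_i^+=\{w_i,p_i\}$ and $\beta_i^-=\{w_i,q_i\}$, so every hyperedge of $\cB$ has size $2$ and $dim(\cB)=2$; note that then $w_i\in S_B$ iff $\{\beta_i^+,\beta_i^-\}\subseteq B$, while $p_i\in S_B$ iff $\beta_i^+\in B$ and $q_i\in S_B$ iff $\beta_i^-\in B$, so a set $B\subseteq\cB$ reads naturally as a (possibly partial, possibly inconsistent) truth assignment. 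As red hyperedges put $\alpha_k=\{w_1,\ldots,w_n\}\cup\{v(\ell):\ell\in C_k\}$ for each clause $C_k$, where $v(x_i)=p_i$ and $v(\overline{x_i})=q_i$, and $\gamma_i=\{w_1,\ldots,w_n\}\cup\{p_i,q_i\}$ for each $i$. Finally set $\cX=\{\{w_1\},\ldots,\{w_n\}\}$.

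I would then analyse Problem~\ref{PBf} for this instance. Since each $w_i$ lies in every red hyperedge, $f(B)=1$ whenever $B$ contains some pair $\{\beta_i^+,\beta_i^-\}$. Conversely, if $B$ contains no such pair (call $B$ \emph{consistent}), then no $w_j$ is in $S_B$, so $S_B$ hits $\gamma_i$ only via $p_i$ or $q_i$, which forces $B$ to contain $\beta_i^+$ or $\beta_i^-$ for every $i$ (call $B$ \emph{full}); and then $S_B$ hits $\alpha_k$ iff the assignment read from $B$ makes some literal of $C_k$ true. Hence $f(B)=1$ iff $B$ is inconsistent, or $B$ is a consistent full assignment satisfying $\varphi$. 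From this description the minimal solutions of Problem~\ref{PBf} are exactly the $n$ ``trivial'' sets $\{\beta_i^+,\beta_i^-\}$ (using $n\geq 2$, a strict subset of such a pair is consistent and not full, hence has $f=0$), together with one set $B_\sigma$ for each satisfying assignment $\sigma$ of $\varphi$.

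Next I would transport this to $btr(\cA,\cB)$ with Lemma~\ref{LemmaB}. For the trivial solution $B=\{\beta_i^+,\beta_i^-\}$ one gets $S_B=\{w_i,p_i,q_i\}$; here $\{w_i\}$ hits all of $\cA$ while (using $n\geq 2$) the set $\{p_i,q_i\}$ misses $\gamma_j$ for $j\neq i$, so $\{w_i\}$ is the unique minimal hitting set of $\cA$ inside $S_B$. By the second item of Lemma~\ref{LemmaB} the trivial solutions thus contribute to $btr(\cA,\cB)$ exactly the elements of $\cX$; in particular each $\{w_i\}$ is a bi-objective minimal transversal, so $(\cA,\cB,\cX)$ is a valid $BIDUAL$ instance. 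For a solution $B_\sigma$ coming from a satisfying assignment, every minimal hitting set $S$ of $\cA$ inside $S_{B_\sigma}$ satisfies $\cB_S=B_\sigma$ (Lemma~\ref{LemmaB}, second item), and $B_\sigma$, being consistent, differs from all pairs $\{\beta_i^+,\beta_i^-\}$, so such an $S$ is not in $\cX$. Therefore $btr(\cA,\cB)=\cX$ precisely when $\varphi$ is unsatisfiable. Since $(\cA,\cB,\cX)$ is clearly constructed in polynomial time, this gives the reduction: $\varphi$ is satisfiable if and only if $BIDUAL(\cA,\cB,\cX)$ answers by exhibiting a new bi-objective minimal transversal.

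The step I expect to be the real obstacle is the ``unsatisfiable'' direction, i.e.\ proving that the padding $\cX$ is then \emph{exactly} $btr(\cA,\cB)$ and not larger; this forces both halves of the analysis above to be tight — the $\gamma_i$ must genuinely enforce ``full'' and the $\alpha_k$ must genuinely test satisfaction, so that no unexpected consistent minimal $\cB$-set survives, and each inconsistency gadget must contribute the single, explicitly known transversal $\{w_i\}$. This last point is also why the reduction is described at the hypergraph level, placing one shared vertex $w_i$ in all red hyperedges: unfolding an analogous depth-$3$ formula would duplicate the ``$w_i$'' term once per clause and destroy the bound on $dim(\cB)$.
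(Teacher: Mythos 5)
Your construction is essentially the paper's reduction, and your argument is correct: the same literal vertices, the same auxiliary vertices ($w_i$ playing the role of the paper's $y_i$) placed in every red hyperedge, the same size-$2$ blue hyperedges pairing each literal with its auxiliary vertex, and the same padding $\cX$ consisting of the singletons of auxiliary vertices. The only differences are minor: you add the tautological red hyperedges $\gamma_i=\{w_1,\ldots,w_n\}\cup\{p_i,q_i\}$ (forcing full assignments and guaranteeing minimality of the padded solutions) where the paper instead assumes without loss of generality that no literal occurs in every clause, and you verify correctness by first characterizing the minimal $\cB$-sets of Problem~\ref{PBf} and then transporting via Lemma~\ref{LemmaB}, whereas the paper argues directly on the bi-objective transversals.
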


\begin{proof}

Let us consider a 3-SAT instance with boolean variables $\{ x_1, \ldots , x_n \}$ and clauses $C_1, \ldots ,C_m$. We can consider, without loss of generality, that there exists no $i$ such that all clauses contain either $x_i$ or $\overline{x_i}$.

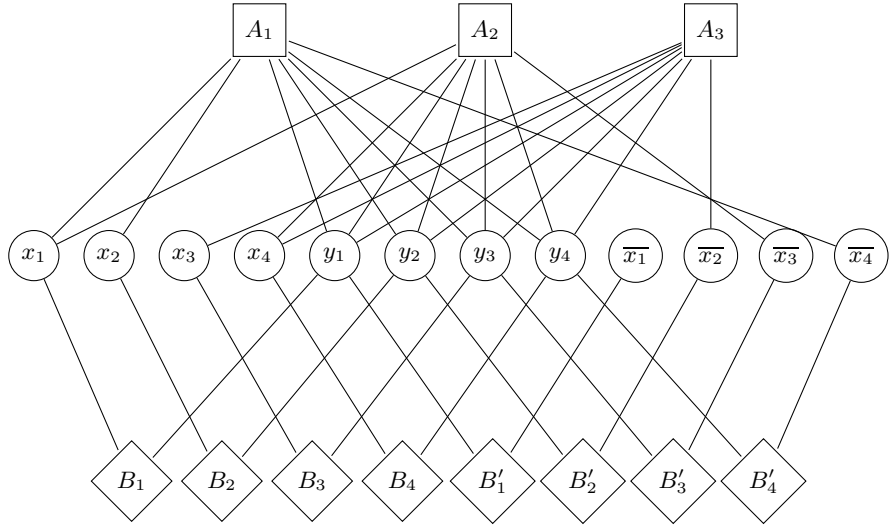
\begin{figure}
  \begin{center}
  \begin{tikzpicture}
            \node[whitestate] (x1) at (0,3) {$x_1$};
            \node[whitestate] (x2) at (1,3) {$x_2$};
            \node[whitestate] (x3) at (2,3) {$x_3$};
            \node[whitestate] (x4) at (3,3) {$x_4$};
            \node[whitestate] (x1bar) at (8,3) {$\overline{x_1}$};
            \node[whitestate] (x2bar) at (9,3) {$\overline{x_2}$};
            \node[whitestate] (x3bar) at (10,3) {$\overline{x_3}$};
            \node[whitestate] (x4bar) at (11,3) {$\overline{x_4}$};
            \node[whitestate] (y1) at (4,3) {$y_1$};
            \node[whitestate] (y2) at (5,3) {$y_2$};
            \node[whitestate] (y3) at (6,3) {$y_3$};
            \node[whitestate] (y4) at (7,3) {$y_4$};
            \node[redstate] (A1) at (3,6) {$A_1$};
            \node[redstate] (A2) at (6,6) {$A_2$};
            \node[redstate] (A3) at (9,6) {$A_3$};
            \node[bluestate] (B1) at (1.3,0) {$B_1$};
            \node[bluestate] (B2) at (2.5,0) {$B_2$};
            \node[bluestate] (B3) at (3.7,0){$B_3$};
            \node[bluestate] (B4) at (4.9,0){$B_4$};
            \node[bluestate] (B1prime) at (6.1,0) {$B'_1$};
            \node[bluestate] (B2prime) at (7.3,0) {$B'_2$};
            \node[bluestate] (B3prime) at (8.5,0){$B'_3$};
            \node[bluestate] (B4prime) at (9.7,0){$B'_4$};
            \path (x1) edge (B1) ;
            \path (y1) edge (B1) ;
            \path (x1bar) edge (B1prime) ;
            \path (y1) edge (B1prime) ;
            \path (x2) edge (B2) ;
            \path (y2) edge (B2) ;
            \path (x2bar) edge (B2prime) ;
            \path (y2) edge (B2prime) ;
            \path (x3) edge (B3) ;
            \path (y3) edge (B3) ;
            \path (x3bar) edge (B3prime) ;
            \path (y3) edge (B3prime) ;
            \path (x4) edge (B4) ;
            \path (y4) edge (B4) ;
            \path (x4bar) edge (B4prime) ;
            \path (y4) edge (B4prime) ;
            \path (x1) edge (A1) ;
            \path (x2) edge (A1) ;
            \path (x4bar) edge (A1) ;
            \path (x1) edge (A2) ;
            \path (x3bar) edge (A2) ;
            \path (x4) edge (A2) ;
            \path (x2bar) edge (A3) ;
            \path (x3) edge (A3) ;
            \path (x4) edge (A3) ;
            \path (y1) edge (A1) ;
            \path (y1) edge (A2) ;
            \path (y1) edge (A3) ;
            \path (y2) edge (A1) ;
            \path (y2) edge (A2) ;
            \path (y2) edge (A3) ;
            \path (y3) edge (A1) ;
            \path (y3) edge (A2) ;
            \path (y3) edge (A3) ;
            \path (y4) edge (A1) ;
            \path (y4) edge (A2) ;
            \path (y4) edge (A3) ;
  \end{tikzpicture}
  \end{center}
  \caption{Instance for the reduction of 3-SAT to $BIDUAL$. The considered clauses are $C_1=x_1 \vee x_2 \vee \overline{x_4}$, $C_2=x_1 \vee \overline{x_3} \vee x_4$ and $C_3= \overline{x_2} \vee x_3 \vee x_4$.}
  \label{3SATfig}
 \end{figure}

 Construct the following hypergraph (see Figure~\ref{3SATfig}):
\begin{enumerate}
\item Consider $3n$ vertices $V = \{ x_1, \overline{x_1} , \ldots, x_n, \overline{x_n}, y_1, \ldots , y_n \} $.
\item For every $1\leq j\leq m$, define a red hyperedge $A_j$ including the $x_i$'s and $\overline{x_i}$'s defining $C_j$ as well as  $\{y_1, \ldots , y_n \}$.
\item For every $1\leq i\leq n$, define a blue hyperedge $B_i = \{x_i, y_i \}$ and a blue hyperedge $ B'_i = \{ \overline{x_i}, y_i \}$.
Observe that $\cB$ is of dimension $2$.
\end{enumerate}

For every $1\leq i\leq n$, consider  $ S_i=\{y_i \}$. It covers all the red hyperedges as well as $B_i$ and $B'_i$. As neither $x_i$ nor $\overline{x_i}$ is contained in all the clauses, it is a minimal solution to the
bi-objective problem.

Consider  the problem $BIDUAL(\cA,\cB,\cX)$ for $\cX = \{  \{y_1 \}, \ldots, \{y_n\} \}$.

Suppose that there exists a minimal bi-objective transversal $S$ in $btr(\cA,\cB) \setminus \cX$. For any $1\leq i\leq n$, $S$ cannot contain both vertices $x_i$ and $\overline{x_i}$. Indeed, it would then cover $B_i$ and $B'_i$, implying $\cB_{S_i} \subset \cB_S$. As at least one clause contains neither $x_i$ nor $\overline{x_i}$, this inclusion would be strict and contradict the minimality of $S$ with respect to the second
bi-objective criterion.

 The former implies that $S$ corresponds to a truth assignment of the boolean variables $x_i$. The fact that it covers $\cA$ is then equivalent to the fact that all clauses are satisfied.
Deciding if there is such  $S$ is therefore NP-Complete, i.e. deciding whether $btr(\cA,\cB) = \cX$ is coNP-complete.
\qed\end{proof}

\begin{remark}
The same reduction is an alternative proof to \cite{GK99} of the coNP-hardness of $GEN$ by considering $\mathcal{S} = \bigcup_{1\leq i\leq n} (B_i,B'_i)$. A same reasoning can then be
applied  to show that finding a new solution would be equivalent to finding a truth assignment satisfying all clauses.
\end{remark}

\subsection{Results for vertices of bounded degree}

The reduction of the
former paragraph
 can be slightly adapted to prove that both $BIDUAL$ and $GEN$ remain hard if the degree of the vertices is bounded rather than the dimension of $\cB$. This is a strong difference with the traditional transversal problem as $DUAL$ becomes polynomial in this case.

\begin{theorem}
$3-SAT$ can be reduced to $BIDUAL$ and $GEN$ in the case of a bounded degree of the vertices of $V$.
\end{theorem}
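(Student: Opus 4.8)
The plan is to adapt the construction of Figure~\ref{3SATfig}. There, every vertex already has bounded degree except for two families: the literal vertices $x_i,\overline{x_i}$, whose degree is $1$ plus the number of clauses containing the literal, and the vertices $y_i$, which by design belong to \emph{every} red hyperedge and hence have degree $m+2$. The first family is harmless after a standard preprocessing: $3$-SAT stays NP-complete if one first rewrites the instance so that every variable occurs in at most a fixed number of clauses (replace a variable occurring $k$ times by $k$ fresh copies linked by a cycle of implication clauses), so we assume this from the start. The second family is the real obstacle, and it is handled by ``splitting'' each $y_i$.

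Concretely, I would keep the literal vertices and red hyperedges $A_j=\{\text{literals of }C_j\}\cup\{y_1^j,\dots,y_n^j\}$, but replace each $y_i$ by $m$ copies $y_i^1,\dots,y_i^m$, placing $y_i^j$ in $A_j$ only; for the blue part I set $B_i=\{x_i\}\cup\{y_i^1,\dots,y_i^m\}$, $B_i'=\{\overline{x_i}\}\cup\{y_i^1,\dots,y_i^m\}$, and I add a third blue hyperedge $B_i^{\sharp}=\{x_i,\overline{x_i}\}$ for every $i$. Then each $y_i^j$ has degree $3$ and each $x_i,\overline{x_i}$ has bounded degree, while $\cB$ is no longer of bounded dimension --- which is exactly the point. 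Put $S_i=\{y_i^1,\dots,y_i^m\}$ and $\cX=\{S_1,\dots,S_n\}$. The routine verifications come first: $S_i$ is a minimal transversal of $\cA$ (dropping $y_i^j$ uncovers $A_j$), $\cB_{S_i}=\{B_i,B_i'\}$, and $S_i$ is a bi-objective minimal transversal, since every vertex lies in some blue hyperedge and no single literal vertex covers $\cA$, so no transversal of $\cA$ has its $\cB$-footprint strictly inside $\{B_i,B_i'\}$.

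The core is the equivalence ``$btr(\cA,\cB)\setminus\cX\neq\emptyset$ iff the $3$-SAT instance is satisfiable''. For the forward implication, take a bi-objective minimal transversal $S\notin\cX$. If $S$ met some $y_i^j$ then $\{B_i,B_i'\}\subseteq\cB_S$, so by the second bi-objective criterion $\cB_S=\{B_i,B_i'\}$; this forces $S\subseteq\{y_i^1,\dots,y_i^m\}$ and, since $A_j$ meets $\{y_i^1,\dots,y_i^m\}$ only in $y_i^j$, $S=S_i\in\cX$, a contradiction. Hence $S$ uses only literal vertices, and it cannot contain both $x_i$ and $\overline{x_i}$, for that would give $\{B_i,B_i',B_i^{\sharp}\}\subseteq\cB_S$ and then $S_i$ would witness that $S$ fails the second bi-objective criterion. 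So $S$ corresponds to a consistent partial truth assignment, and covering all the $A_j$ with literal vertices forces every clause to contain a literal satisfied by it; extending arbitrarily yields a satisfying assignment. For the converse, from a satisfying assignment $\tau$ I take a minimal transversal $S$ of $\cA$ contained in the set of literal vertices agreeing with $\tau$; then $S\neq\emptyset$ and $S$ is not any $S_i$, and I would check that $S$ is bi-objective minimal by showing that any transversal $S'$ of $\cA$ with $\cB_{S'}\subsetneq\cB_S$ satisfies $S'\subseteq S$ --- avoiding every blue hyperedge outside $\cB_S$ leaves $S'$ no $y$-copy and no literal vertex outside $S$ --- which contradicts the minimality of $S$. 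This shows that deciding $btr(\cA,\cB)=\cX$ is coNP-complete under the bounded-degree restriction, which is the $BIDUAL$ statement; the $GEN$ statement then follows along the lines of the remark after Theorem~\ref{3SATth}, taking $\mathcal{S}=\{\,\{B_i,B_i'\}\mid 1\le i\le n\,\}$ and passing between minimal $\cB$-sets and bi-objective minimal transversals through Lemma~\ref{LemmaB}.

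I expect the delicate point to be exactly this rigidity argument in the converse: one must make sure that splitting $y_i$ does not introduce spurious bi-objective minimal transversals mixing a literal vertex $x_i$ with a nonempty proper subset of the copies $y_i^j$. Such a set has $\cB$-footprint containing $B_i^{\sharp}$ in addition to $\{B_i,B_i'\}$, hence is strictly dominated by $S_i$; this is precisely why the hyperedge $B_i^{\sharp}$ is added, and verifying that it eliminates all such mixtures (and nothing else) is the step that needs care.
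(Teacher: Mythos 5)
Your proof is correct and takes essentially the same route as the paper: split each $y_i$ into per-clause copies $y_i^j$, keep the blue pairs $B_i,B'_i$, and add a third blue hyperedge per variable (your $B_i^{\sharp}$, the paper's $B''_i$) precisely so that any solution containing a $y$-copy or both polarities of a variable has a $\cB$-footprint strictly containing $\cB_{S_i}$, forcing new solutions to encode satisfying assignments. The only difference is minor: the paper also splits the literal vertices $x_i,\overline{x_i}$ to get degree exactly $3$ with no preprocessing, whereas you keep them unsplit and invoke bounded-occurrence $3$-SAT, which serves the bounded-degree claim equally well.
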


\begin{proof}
The reduction is similar to the one used to prove Theorem~\ref{3SATth}. Two modifications are done:

\begin{enumerate}
\item The vertices of the hypergraph are split: each $y_i$ is substituted by $m$ vertices $y_i^j, 1\leq j\leq m$, $y_i^j$ belonging to the hyperedges $B_i$, $B'_i$ and $A_j$. Similarly, each $x_i$ (resp. $\overline{x_i}$) is split in as many vertices as the number of red hyperedges it belongs to, and each of them belongs to $B_i$ (resp. $B'_i$) and one of the copies.
\item For each $1\leq i\leq n$, a new blue hyperedge $B''_i$ is created, containing all  $x_i^j$ and $\overline{x_i^j}$.
\end{enumerate}

Figure~\ref{Splitfig} gives an example of this reduction.

\begin{figure}
  \begin{center}
  \begin{tikzpicture}
            \node[whitestate] (x11) at (1,3) {$x_1^1$};
            \node[whitestate] (x12) at (2,3) {$x_1^2$};
            \node             (x2) at (3,3) {$\ldots$};
            \node[whitestate] (x1bar) at (8,3) {$\overline{x_1}$};
            \node             (x2bar) at (9,3) {$\ldots $};
            \node[whitestate] (y11) at (4,3) {$y_1^1$};
            \node[whitestate] (y12) at (5,3) {$y_1^2$};
            \node[whitestate] (y13) at (6,3) {$y_1^3$};
            \node             (y2) at (7,3) {$\ldots$};
            \node[redstate] (A1) at (3,6) {$A_1$};
            \node[redstate] (A2) at (6,6) {$A_2$};
            \node[redstate] (A3) at (9,6) {$A_3$};
            \node[bluestate] (B1) at (-1,0.2) {$B_1$};
            \node[bluestate] (B2) at (0,0) {$B_2$};
            \node[bluestate] (B3) at (1,0.2){$B_3$};
            \node[bluestate] (B4) at (2,0){$B_4$};
            \node[bluestate] (B1prime) at (3,0.2) {$B'_1$};
            \node[bluestate] (B2prime) at (4,0) {$B'_2$};
            \node[bluestate] (B3prime) at (5,0.2){$B'_3$};
            \node[bluestate] (B4prime) at (6,0){$B'_4$};
            \node[bluestate] (B1second) at (7,0.2) {$B''_1$};
            \node[bluestate] (B2second) at (8,0) {$B''_2$};
            \node[bluestate] (B3second) at (9,0.2){$B''_3$};
            \node[bluestate] (B4second) at (10,0){$B''_4$};
            \path (x11) edge (B1) ;
            \path (x12) edge (B1) ;
            \path (x11) edge (B1second) ;
            \path (x12) edge (B1second) ;
            \path (y11) edge (B1) ;
            \path (y12) edge (B1) ;
            \path (y13) edge (B1) ;
            \path (y11) edge (B1prime) ;
            \path (y12) edge (B1prime) ;
            \path (y13) edge (B1prime) ;
            \path (x1bar) edge (B1prime) ;
            \path (x1bar) edge (B1second) ;
            \path (x11) edge (A1) ;
            \path (x12) edge (A2) ;
            \path (y11) edge (A1) ;
            \path (y12) edge (A2) ;
            \path (y13) edge (A3) ;
  \end{tikzpicture}
  \end{center}
  \caption{Reduction of 3-SAT to $BIDUAL$. The considered clauses are $C_1=x_1 \vee x_2 \vee \overline{x_4}$, $C_2=x_1 \vee \overline{x_3} \vee x_4$ and $C_3= \overline{x_2} \vee x_3 \vee x_4$. For readability reasons, only the vertices corresponding to the index $i=1$ are represented in the intermediate layer.}
  \label{Splitfig}
 \end{figure}
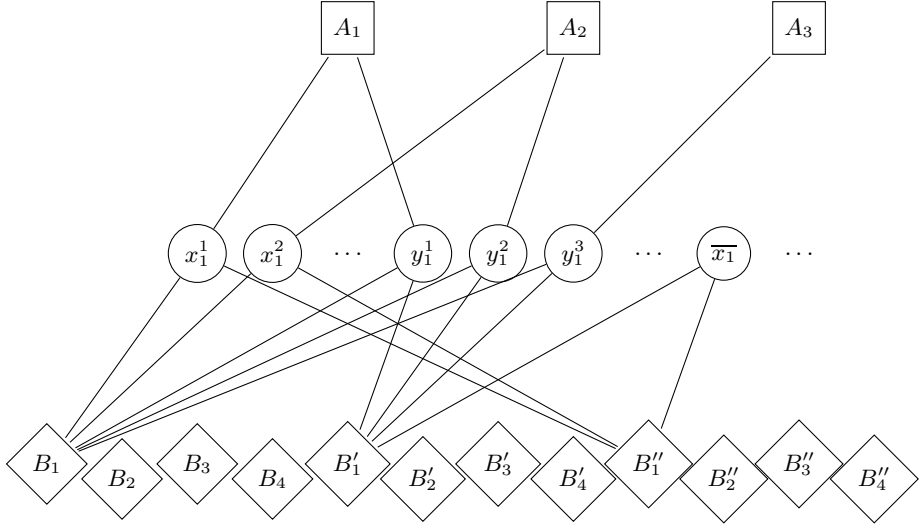

It is easy to see that each vertex of $V$ is of degree exactly $3$. Moreover, for each $i$, $S_i=\{ y_i^1, \ldots, y_i^m \}$ is a solution to the
bi-objective problem and $\cB_{S_i} = \{B_i,B'_i\}$. Consider then the problem $BIDUAL(\cA,\cB,\cX)$ where $\cX$ is the set of all those solutions.

Let $S$ be a new solution of the
bi-objective problem. The set $S$ cannot, for any indices $i,j,k$, contain both $y_i^k$ and a vertex of index $j$. Indeed, $\cB_S$ would then contain $B_i$, $B'_i$ and one set among $B_j$, $B'_j$ and $B''_j$, and therefore also contain $\cB_{S_i}$ as a proper subset. $S$ thus contains no vertex $y$.

The same argument holds if $S$ contains both $x_i^k$ and $\overline{x_i^l}$ for some indices $i,k,l$. Indeed, $\cB_S$ then contains $B_i$, $B'_i$ and  $B''_i$ and consequently contains $\cB_{S_i}$ as a proper subset.

 For every index $i$, $S$ therefore contains either only vertices of the form $x_i^k$ or only vertices of the form $\overline{x_i^k}$. As $S$ covers all the red hyperedges, assigning a value to the variable $x_i$ in the clauses according to this criterion yields a truth assignment satisfying all the clauses.
\smallskip

The reduction from $3-SAT$ to $GEN$ can be done with exactly the same arguments, considering as already known the set $\mathcal{S}$ of solutions of the form $\{B_i, B'_i\}$.

\qed\end{proof}

\section{Bounding the dimension of $\cA$}\label{boundedAsection}

We show in this section that one can reduce Problem \ref{PBgeneral} to the transversal hypergraph problem in a hypergraph of bounded dimension whenever $\cA$ is of bounded dimension.

\begin{theorem}\label{THboundedA}
$GEN$ can be solved in  polynomial time if $\cA$ is of bounded dimension.
As a consequence, Problem~\ref{PBgeneral} can be solved in incremental polynomial time.
\end{theorem}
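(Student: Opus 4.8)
Fix the constant $d=\dim(\cA)$; in the $GEN$ formulation this is a bound on the number of disjuncts (monomials) appearing in each top-level conjunct of $f=\bigwedge_{a\in\cA}\bigvee_{x\in a}\bigwedge_{b_i\in\cB_x}c_i$. The idea is to build, in polynomial time, a hypergraph $\cC$ on the vertex set $\cB$ with $\dim(\cC)\le d$ whose minimal transversals are exactly the minimal $B\subseteq\cB$ with $f(B)=1$, and then appeal to the polynomiality of $DUAL$ (Problem~\ref{PBdual}) on hypergraphs of bounded dimension~\cite{EG95}.

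\textbf{Construction of $\cC$.} Identifying $B\subseteq\cB$ with the assignment $c_i=1\iff b_i\in B$, one has $f(B)=1$ iff for every $a\in\cA$ there is $x\in a$ with $\cB_x\subseteq B$; equivalently $f(B)=0$ iff there is $a\in\cA$ such that $\cB\setminus B$ meets $\cB_x$ for every $x\in a$, i.e. $\cB\setminus B$ is a transversal of the local hypergraph $\cH_a:=\{\cB_x : x\in a\}$ (on vertex set $\cB$). For each $a\in\cA$ I would compute $tr(\cH_a)$: since $\cH_a$ has at most $|a|\le d$ hyperedges, every minimal transversal of $\cH_a$ equals $\{\phi(x):x\in a\}$ for some choice function $\phi$ with $\phi(x)\in\cB_x$, so $|tr(\cH_a)|\le\prod_{x\in a}|\cB_x|\le|\cB|^d$, and $tr(\cH_a)$ is obtained in time polynomial in $|\cB|$ (for fixed $d$) by enumerating these choice functions and discarding the non-minimal sets (if some $\cB_x=\emptyset$, the conjunct indexed by $a$ is identically true and $tr(\cH_a)=\emptyset$, consistently). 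Set $\cC:=\bigcup_{a\in\cA}tr(\cH_a)$; then $\dim(\cC)\le d$, $|\cC|\le|\cA|\cdot|\cB|^d$, and $\cC$ is built in polynomial time.

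\textbf{Equivalence and conclusion.} I claim $f(B)=1$ iff $B$ is a transversal of $\cC$. If $f(B)=0$, a witnessing $a$ makes $\cB\setminus B$ a transversal of $\cH_a$, hence $\cB\setminus B\supseteq T$ for some $T\in tr(\cH_a)\subseteq\cC$, so $B$ misses $T$. Conversely, if $B$ misses some $T\in tr(\cH_a)\subseteq\cC$, then $T\subseteq\cB\setminus B$ meets every $\cB_x$ with $x\in a$, so $\cB_x\not\subseteq B$ for all $x\in a$ and $f(B)=0$. Hence the minimal $B$ with $f(B)=1$ are precisely $tr(\cC)$, so an instance $GEN(\cS)$ for $f$ is literally the instance $DUAL(\cC,\cS)$, which is solvable in polynomial time because $\dim(\cC)\le d$~\cite{EG95}. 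For the consequence, run the algorithm described after Lemma~\ref{LemmaB}: enumerate the solutions of Problem~\ref{PBf} by repeated calls to $DUAL(\cC,\cdot)$ (incremental polynomial, by the above) and, for each new $B$, enumerate the minimal transversals of $\cA$ contained in $S_B$, which is a transversal-hypergraph computation on a hypergraph of dimension $\le d$ and hence done in incremental polynomial time~\cite{EG95,BEG00}; by Lemma~\ref{LemmaB} and the surrounding discussion this lists $btr(\cA,\cB)$ without repetition, i.e. solves Problem~\ref{PBgeneral} in incremental polynomial time.

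\textbf{Main obstacle.} The only point needing care is keeping $\cC$ simultaneously of bounded dimension and of polynomial size, which rests on the elementary fact that a minimal transversal of a hypergraph with at most $d$ edges has at most $d$ vertices; this makes the number of choice functions to inspect — and thus $|tr(\cH_a)|$ — only $|\cB|^{O(d)}$. The rest is bookkeeping around the equivalence $f(B)=1\iff B\text{ is a transversal of }\cC$ and around combining the two enumeration phases into a single incremental-polynomial procedure, which is already set up by the discussion following Lemma~\ref{LemmaB}.
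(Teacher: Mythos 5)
Your proposal is correct and follows essentially the same route as the paper: for each hyperedge $a\in\cA$ you build the local hypergraph $\{\cB_x : x\in a\}$, take the union of its transversal hypergraphs (the paper's $\bigcup_{A\in\cA} tr(\cH_A)$), prove the same equivalence $f(B)=1$ iff $B$ is a transversal of this union (the paper's Lemma~\ref{lem:trans} plus the following proposition, via the same complementation argument), and exploit the bound $\dim\le d$, size $\le|\cA||\cB|^d$ to invoke the bounded-dimension dualization result of~\cite{EG95}, then combine with Lemma~\ref{LemmaB} exactly as the paper does. The only additions are cosmetic (the choice-function enumeration of $tr(\cH_a)$ and the $\cB_x=\emptyset$ edge case), both consistent with the paper's argument.
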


When the dimension of $\cA$ is bounded, given a set $B \in \cB$, the enumeration
of all minimal sets covering $\cA$ and included in $S_B$ corresponds to the enumeration of minimal transversals of a hypergraph of bounded dimension, and therefore can be done in incremental polynomial time. To show that Problem \ref{PBgeneral} is polynomial, it is sufficient to show that Problem \ref{PBf} is polynomial. Actually we will show that Problem \ref{PBf} can be also reduced to the enumeration of
all minimal transversals of a hypergraph of bounded dimension.

\begin{definition}
For $A\in \cA$, let us denote by $\cH_A$ the hypergraph such that:
          \begin{itemize}
              \item $V(\cH_A)=\bigcup\limits_{x\in A} \cB_{x}$
              \item $\cE(\cH_A)=\{\cB_{x} \mid x\in A\}$
          \end{itemize}
\end{definition}

The following proposition gives a characterisation of
the
subsets $B$ of $\cB$  for which $S_B$ covers a given hyperedge $A\in \cA$.
Given $A\in \cA$, by construction of the hypergraph  $\cH_A$, $S_B$
intersects $\cH_A$ if and only if $B$ contains
a hyperedge of $\cH_A$.

For our purpose, we need to reformulate this simple fact. The formulation given in Lemma \ref{lem:trans}  can be seen as a direct consequence of the following observation.
A subset of vertices $X$ of a hypergraph $\cH$ contains
a hyperedge if and only if $X$ is a transversal of the hypergraph $tr(\cH)$. Indeed, by the duality property between a hypergraph and its transversasl hypergraph (see \cite{Berge89}), the minimal transversals of $tr(\cH)$ are exactly the minimal hyperedges of $\cH$. Thus, a subset of vertices contains
a hyperedge of $\cH$ if and only if it contains a minimal transversal of $tr(\cH)$, i.e. if it is a transversal of $tr(\cH)$.

\begin{lemma}\label{lem:trans}
          Let $B\subseteq \cB$ and $A\in \cA$. We then have that $S_B \cap A\neq \emptyset$ if and only if $B$ is a transversal of $tr(\cH_A)$.
\end{lemma}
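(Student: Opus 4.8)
The plan is to reduce the statement to the general duality fact recalled just before the lemma, namely that a vertex subset $X$ of a hypergraph $\cH$ contains a hyperedge of $\cH$ if and only if $X$ is a transversal of $tr(\cH)$. I would carry this out in three short steps.

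First, I would unfold the left-hand side in terms of the combinatorial data. By definition of $S_B$ we have $x\in S_B$ exactly when $\cB_x\subseteq B$, so $S_B\cap A\neq\emptyset$ holds if and only if there exists $x\in A$ with $\cB_x\subseteq B$. By the definition of $\cH_A$, the sets $\cB_x$ for $x\in A$ are precisely the hyperedges of $\cH_A$; hence the condition $S_B\cap A\neq\emptyset$ is equivalent to the statement that $B$ contains (as a superset) some hyperedge of $\cH_A$. Note this condition only involves $B\cap V(\cH_A)$, since $V(\cH_A)=\bigcup_{x\in A}\cB_x$.

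Second, I would invoke Berge's duality between a hypergraph and its transversal hypergraph, in the form that the minimal transversals of $tr(\cH_A)$ are exactly the minimal hyperedges of $\cH_A$. Together with the fact, stated at the outset of the paper, that being a transversal is an upward-closed property -- so that a set is a transversal of $tr(\cH_A)$ if and only if it contains a minimal transversal of $tr(\cH_A)$ -- this gives: $B$ is a transversal of $tr(\cH_A)$ iff $B$ contains a minimal hyperedge of $\cH_A$, iff $B$ contains some hyperedge of $\cH_A$ (every hyperedge contains a minimal one). Chaining this with the first step yields the claimed equivalence.

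As for the main obstacle: there is essentially no deep difficulty here, since the content of the lemma is precisely the reformulation already announced in the text. The only point requiring care is the bookkeeping about the ambient vertex set -- $B$ is a subset of $\cB$ while $tr(\cH_A)$ lives on $V(\cH_A)\subseteq\cB$ -- so I would make explicit that the transversal condition for $B$ depends only on $B\cap V(\cH_A)$, and that it is monotonicity of the transversal property that lets one pass freely between ``$B$ contains a hyperedge of $\cH_A$'', ``$B$ contains a minimal hyperedge of $\cH_A$'', and ``$B$ is a transversal of $tr(\cH_A)$''.
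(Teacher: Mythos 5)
Your proof is correct, but it follows a different route from the one the paper writes out in the proof environment. You reduce the lemma to Berge's duality $tr(tr(\cH_A))=\min_{\subseteq}\cE(\cH_A)$ plus monotonicity of the transversal property: unfold $S_B\cap A\neq\emptyset$ as ``$B$ contains some $\cB_x$, $x\in A$, i.e.\ some hyperedge of $\cH_A$'', then pass through ``contains a minimal hyperedge'' $=$ ``contains a minimal transversal of $tr(\cH_A)$'' $=$ ``is a transversal of $tr(\cH_A)$''. This is exactly the informal reformulation the authors sketch in the paragraph preceding the lemma, so in spirit you match their intent; however, their actual proof does not invoke the duality theorem at all. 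It argues both directions from scratch: for ($\Rightarrow$), any $t\in tr(\cH_A)$ must meet the hyperedge $\cB_x\subseteq B$ coming from a vertex $x\in S_B\cap A$; for ($\Leftarrow$), assuming $S_B\cap A=\emptyset$ they build the explicit set $t=\bigcup_{x\in A}(\cB_x\setminus B)$, observe it is a transversal of $\cH_A$, shrink it to a minimal transversal $t'$ disjoint from $B$, and contradict $B$ being a transversal of $tr(\cH_A)$. What your version buys is brevity and transparency, at the price of leaning on the (cited, standard) duality theorem as a black box; the paper's version is elementary and self-contained. Your attention to the bookkeeping issue (that the transversal condition for $B\subseteq\cB$ only depends on $B\cap V(\cH_A)$, since minimal transversals of $tr(\cH_A)$ live inside $V(\cH_A)$) is the one point where your argument genuinely needs care, and you handle it correctly.
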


\begin{proof}
        ($\Rightarrow$)  Assume that there exists $x \in S_B \cap A$. Then by definition of $S_B$, $\cB_x\subseteq B$. Let $t\in tr(\cH_A)$. Since $\cB_x$ is a hyperedge of $\cH_A$, $t$ must intersect $\cB_x$  and then $t\cap B\neq \emptyset$. We conclude that $B$ is a transversal of $tr(\cH_A)$.

        \medskip

        \noindent ($\Leftarrow$)  Assume now that $B$ is a transversal of $tr(\cH_A)$ and that $S_B \cap A= \emptyset$. By definition of $S_B$, for all $x\in A$, there exists $b_x\in \cB_x$ such that $b_x \notin B$. Let $t\subseteq \cB$ be the set formed by all $b_x$ for all $x\in A$ i.e. $t:=\bigcup\limits_{x \in A} \cB_x \setminus  B$. Since  for all $x\in A$,  $\cB_x \setminus  B \neq \emptyset$, $t$ is a transversal of $\cH_A$ and then contains a minimal transversal $t'$ of $\cH_A$. However by construction of $t$, we have $t'\cap B=\emptyset$, contradicting the fact that $B$ is a transversal of $tr(\cH_A)$.
\qed\end{proof}

Now since we require that $S_B$ covers all hyperedges of $\cA$, $B$ must be a
transversal of $\cH_A$ for every $A\in \cA$.

\begin{proposition}
          $\min\limits_{\subseteq}\{B \subseteq \cB \mid f(B)=1\}=tr(\bigcup\limits_{A\in \cA} tr(\cH_A))$.
\end{proposition}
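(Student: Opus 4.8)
The plan is to prove that the two families of subsets of $\cB$ appearing on the two sides — before the operation $\min_\subseteq$ is applied — already coincide as \emph{sets} of subsets, and then simply pass to minimal elements. First I would unfold the definition of the predicate $f$: for $B\subseteq\cB$, $f(B)=1$ means exactly that $S_B$ is a transversal of $\cA$, i.e. $S_B\cap A\neq\emptyset$ for every $A\in\cA$. Applying Lemma~\ref{lem:trans} separately to each hyperedge $A\in\cA$, this condition is equivalent to: $B$ is a transversal of $tr(\cH_A)$ for every $A\in\cA$.

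Next I would observe that being simultaneously a transversal of each $tr(\cH_A)$ is the same as being a transversal of their union: $B$ meets every hyperedge of $tr(\cH_A)$ for all $A\in\cA$ if and only if $B$ meets every hyperedge of $\bigcup_{A\in\cA} tr(\cH_A)$. Consequently $\{B\subseteq\cB\mid f(B)=1\}$ is precisely the set of transversals (contained in $\cB$) of the hypergraph $\cG:=\bigcup_{A\in\cA} tr(\cH_A)$; here one notes that the vertex set of $\cG$ is contained in $\cB$, since $V(\cH_A)=\bigcup_{x\in A}\cB_x\subseteq\cB$ and hence every hyperedge of every $tr(\cH_A)$ is a subset of $\cB$.

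Finally, taking the $\subseteq$-minimal elements of each side, the left-hand side becomes $\min_\subseteq\{B\subseteq\cB\mid f(B)=1\}$ and the right-hand side becomes the collection of minimal transversals of $\cG$, which by definition of the transversal hypergraph is exactly $tr(\cG)=tr\bigl(\bigcup_{A\in\cA} tr(\cH_A)\bigr)$, giving the claimed equality. As an aside, $f$ is monotone — $B\subseteq B'$ forces $S_B\subseteq S_{B'}$, so $f(B)=1$ implies $f(B')=1$ — which confirms that the family $\{B\mid f(B)=1\}$ is upward closed and genuinely determined by its minimal members; but this remark is not strictly needed, since the set of transversals of any hypergraph is automatically upward closed.

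I do not expect a genuine obstacle here: the argument is a short chain of equivalences resting on Lemma~\ref{lem:trans} and the definition of $tr(\cdot)$. The only points demanding a little care are bookkeeping ones — keeping track that $\cH_A$, each $tr(\cH_A)$, and the union $\cG$ all live on vertex sets inside $\cB$, so that the word ``transversal'' consistently means ``subset of $\cB$ intersecting every hyperedge'', and that finiteness of $\cB$ makes $\min_\subseteq$ well defined.
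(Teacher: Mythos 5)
Your proof is correct and follows essentially the same route as the paper: unfold $f$, apply Lemma~\ref{lem:trans} to each $A\in\cA$, note that being a transversal of every $tr(\cH_A)$ is the same as being a transversal of their union, and then pass to $\subseteq$-minimal elements. The extra bookkeeping remarks (vertex sets inside $\cB$, upward closure) are fine but not needed beyond what the paper's argument already uses implicitly.
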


 \begin{proof}

Let $\cH = \bigcup\limits_{A\in \cA} tr(\cH_A)$.

\begin{align*}
   f(B)=1  &  \Longleftrightarrow  \forall A\in \cA,  S_B \cap A \neq \emptyset \quad \mbox{ by definition} \\
           &  \Longleftrightarrow \forall A\in \cA, B \mbox{ is a transversal of } tr(\cH_A)   \quad \mbox{ by Lemma \ref{lem:trans}} \\
           & \Longleftrightarrow   B  \mbox{ is a transversal of } \cH
\end{align*}

        Thus, the set $\{B \subseteq \cB \mid f(B)=1\}$ is exactly the set of transversals of $\cH$. Therefore,  $\min\limits_{\subseteq}\{B \subseteq \cB \mid f(B)=1\}=tr(\bigcup\limits_{A\in \cA} tr(\cH_A))$.
 \qed\end{proof}

If $dim(\cA)=C$ is bounded, for all $A\in \cA$, $\cH_{A}$ has at most $C$ hyperedges and then each minimal transversal $t$ of $\cH_A$ is of size at most $C$. Then $\bigcup\limits_{A\in \cA} tr(\cH_A))$ is a hypergraph of dimension at most $C$ having at most $|\cA||\cB|^{C}$ hyperedges. We can then construct it in polynomial time and enumerate its minimal transversals in incremental polynomial time using the result for the dualization problem \cite{EG95}.

This implies that if $\cA$ is of bounded dimension,
     $\min\limits_{\subseteq}\{B \subseteq \cB \mid f(B)=1\}$ can be enumerated in incremental polynomial time, thus proving Theorem~\ref{THboundedA}.

\bibliographystyle{acm}
\bibliography{references}

\end{document}